\documentclass{amsart}
\usepackage{amsmath}
\usepackage{amsfonts}
\usepackage{geometry}

\setcounter{MaxMatrixCols}{10}

\newtheorem{theorem}{Theorem}
\theoremstyle{plain}

\newtheorem{conjecture}{Conjecture}

\newtheorem{problem}{Problem}

\numberwithin{equation}{section}
\input{tcilatex}

\begin{document}
\title[Permanent and Entanglement]{Matrix permanent and quantum entanglement
\\
of permutation invariant states}
\author{Tzu-Chieh Wei}
\thanks{Institute for Quantum Computing and Department of Physics and
Astronomy, University of Waterloo, Waterloo N2L 3G1, Canada and Department of
Physics and Astronomy, University of British Columbia, Vancouver, BC V6T 1Z1,
Canada; twei@phas.ubc.cai}
\author{Simone Severini}
\thanks{Institute for Quantum Computing \ and Department of Combinatorics \&
Optimization, University of Waterloo, Waterloo N2L 3G1, Canada, 
and Department of Physics and Astronomy, University College London, WC1E 6BT London, United Kingdom;
simoseve@gmail.com\\
\emph{Acknowledgements.} Thanks to David Gross, Otfried G\"{u}hne, and
Shashank Virmani, for fruitful discussion. This work was supported by
DTOARO, ORDCF, CFI, CIFAR, and MITACS. }
\maketitle

\begin{abstract}
We point out that a geometric measure of quantum entanglement is related to
the matrix permanent when restricted to permutation invariant states. This
connection allows us to interpret the permanent as an angle between vectors.
By employing a recently introduced permanent inequality by Carlen, Loss and
Lieb, we can prove explicit formulas of the geometric measure for permutation
invariant basis states in a simple way.
\end{abstract}

\section{Introduction}

In the Editor's statement forewording the 1982 monograph \emph{Permanents} by
Minc~\cite{m82}, Gian-Carlo Rota wrote the following words:

\begin{quotation}
\textquotedblleft A permanent is an improbable construction to which we
might have given little chance of survival fifty years ago. Yet numerous
appearances it has made in physics and in probability betoken the mystifying
usefulness of the concept, which has a way of recurring in the most
disparate circumstances.\textquotedblright
\end{quotation}

The present paper highlights a connection between the permanent and
entanglement of certain quantum states, therefore putting in evidence a
further appearance of the permanent in physics. Some background is useful
for delineating the context. The \emph{permanent} \ and the \emph{%
determinant }of an $n\times n$ matrix $A$ (with entries in a commutative
ring) are respectively defined as%
\begin{equation*}
\begin{tabular}{lll}
$\text{perm}(A)=\dsum\limits_{\pi \in S_{n}}\dprod\limits_{i=1}^{n}A_{i,\pi
(i)}$ & and & $\det \left( A\right) =\dsum\limits_{\pi \in S_{n}}\left(
-1\right) ^{\text{sgn}\left( \pi \right) }\dprod\limits_{i=1}^{n}A_{i,\pi
(i)},$%
\end{tabular}%
\end{equation*}%
where $S_{n}$ denotes the full symmetric group on a set of $n$ symbols. In
light of the definitions, it is reasonable that there are cases, as it was
originarily observed in 1859 by Cayley~\cite{c59}, in which permanents can
be computed by means of determinants (see also, \emph{e.g.}, Kasteleyn~\cite%
{k61}, Godsil and Gutman~\cite{gg81}, and Frieze and Jerrum~\cite{fj95}).
Geometrically, the determinant is the volume of the parallelepiped defined by
the lines of the matrix; algebraically, it is the product of all eigenvalues
including their multiplicities. It is curious that despite its striking
similarity with the determinant, the permanent does not have any known
geometric or algebraic interpretation. Moreover, while standard Gaussian
elimination provides an efficient technique for computing the determinant, the
exact computation of the permanent remains a notoriously difficult problem.

The best known algorithm for an $n\times n$ matrix, due to Ryser in
1963~\cite{r63}, needs $\Theta (n2^{n})$ operations. By a seminal result of
Valiant~\cite{val79}, computing the permanent is indeed \textquotedblleft The\
$\#P$-hard problem\textquotedblright . Thus computing the permanent on worst
case inputs cannot be done in polynomial time unless $P^{\#P}=P$ and in
particular $P=NP$. It follows that algorithms for the permanent acquired a
special position in computational complexity to the point of becoming a
fertile ground for many approaches, including iterative balancing (Linial
\emph{et al.}~\cite{l00}), elementary recursive algorithms (Rasmussen~\cite%
{r94}) and, most relevantly, Markov chain Monte Carlo methods (Broder~\cite%
{b86}, Jerrum and Sinclair~\cite{js89}). These efforts contributed to a deeper
understanding of the permanent and produced important results as fully
polynomial randomized approximation schemes for several types of matrices (see
Jerrum, Sinclair and Vigoda~\cite{jsv01}, Barvinok~\cite{b99}, and the
references contained therein).

From the mathematical perspective, there are two main lines of research
centered on the permanent: the study of permanents and probability (see,
\emph{e.g.}, Friedland \emph{et al.}~\cite{f04}); min/max questions concerning
a number of inequalities, peaking with the proofs by Egorychev~\cite{e81} and
Falikman~\cite{f81} (see also~\cite{m82a}) of the 1926 van der Waerden's
conjecture about permanents of doubly stochastic matrices.

When focusing our attention on physics, we may divide into at least four
groups the known applications of the permanent: the many aspects of the
dimer problem, some uses involving the hafnian, Monte Carlo generators, and,
more pertinent to our discussion, linear optical networks for quantum
information processing. Intuitively, the permanent tends to be related to
bosons while the determinant to fermions. A list of applications follows.

The problem of computing the permanent of a $(0,1)$-matrix is the same as
the problem of counting the number of perfect matchings in a bipartite
graph. This translates into the dimer problem \cite{k61, hl72}. The problem
is traditionally related to models for adsorption of diatomic molecules on
crystal surfaces, mixtures of molecules of different sizes and the
cell-cluster theory of the liquid state (see, \emph{e.g.}, Welsh~\cite{w90}%
). New applications occur in the study of configurations of melting crystals
(Okounkov \emph{et al.}~\cite{o06}), BPS black holes (Heckman and Vafa~\cite%
{hv07}), and quiver gauge theories (Hanany and Kennaway~\cite{h05}).

The \emph{hafnian} of a matrix $A$, denoted by hf$(A)$, is a polynomial
generalizing the permanent as the pfaffian generalizes the determinant. This
notion was originarily introduced by Caianiello to express the perturbation
expansions of boson field theories~\cite{c53}.

Bia\l as and Krzywicki~\cite{bk95} (see also Wosiek~\cite{w97}) introduced a
procedure to include Bose-Einstein correlations in Monte Carlo event
generators. The procedure makes use of the generalized Wigner functions and it
requires to compute the permanent of a correlation matrix depending on
particle momenta and other parameters.

Sheel \emph{et al.}~\cite{s04} have proved that matrix elements of unitarily
transformed photonic multi-mode states can be written as permanents associated
with the symmetric tensor power of the beam splitter matrix (see also Kok
\emph{et al.}~\cite{k07}). The result implies that computing matrix elements
in the Fock basis is not an easy task. Permanents count the ways of
redistributing $n$ single photons through an SU$\left( n\right) $ network to
yield exactly $n$ single photons at the outputs and then allow to compute
probability amplitudes~\cite{s03, s06}; optimal networks are obtained by
maximizing the permanent under given constraints.

The present paper is concerned with quantum entanglement. It is now
established that entanglement is an important physical quantity whose presence
appears to be necessary in many applications of quantum information
processing~\cite{nc}. Given a generic quantum state, detecting and measuring
its entanglement is a challenge from both the mathematical and the
experimental point of view. It is computationally hard~\cite{gu} and
furthermore there is no general agreement on how to quantify entanglement. A
number of different entanglement measures have been therefore introduced using
a variety of approaches (see the Horodeckis' review~\cite{horo} and its
references).

Here we consider a \emph{geometric measure}. This quantifies the angle between
two states, subject to an optimization problem: the closest state with zero
entanglement and the state under analysis. It turns out that, when we restrict
the analysis to permutation invariant states, such a measure can be described
in terms of the permanent of a certain matrix. A recent permanent inequality
of Carlen, Loss and Lieb~\cite{clb06} (see also Samorodnitsky~\cite{sa08}) can
be employed to bypass the optimization problem. Hence, maximizing the
permanent of a matrix with certain constraints, reduces to a simpler problem.
As we have mentioned earlier, the permanent does not have any known geometric
interpretation. The present work contributes in this direction, by
interpreting the permanent as an angle between vectors (or a cosine, to be
more precise).

We remark on a historic note that was pointed out to us by an anonymous
referee regarding Conjecture 1 (see below).  A special case of the proof was
considered in Ref.~\cite{ni}. This manuscript~\cite{we09} contains a proof for
situations in which the symmetric state has an expansion (in either the
computational basis or the symmetric basis $|S(n,\vec{k})\rangle$ defined
below) which requires only positive coefficients. This was shown independently
in Ref.~~\cite{hai} by a completely different method. Subsequently, a more
general proof (with extended results) using elementary matrix theory appeared
in Ref.~\cite{hu}, in which it was recognized that a proof was already given
earlier in the context of the theory of tensor norms and homogeneous
polynomials over Banach spaces, e.g., see Ref.~\cite{di}.

The rest of the paper is structured as follows. In Section 2, we define the
geometric measure of entanglement. In Section 3, we state and prove our
results. For the sake of clarity, we shall also include a few examples.

\section{Geometric measure of entanglement}

The geometric measure of entanglement used here was firstly introduced by
Shimony~\cite{s95} in the setting of bipartite pure states. It was generalized
to the multipartite states by Barnum and Linden~\cite{bl01}, and further
extended by Wei and Goldbart~\cite{wg03}. The intuition beyond the measure
consists of thinking about entanglement as an angle between two states:
namely, the state in analysis and a product state, \emph{i.e.}, a state with
zero entanglement. Crucially, the product state is chosen over all possible
product states so that it minimizes the angle with the state in analysis.

Let
\begin{equation}
|\psi \rangle =\sum_{p_{1}\cdots p_{n}}\chi _{p_{1}p_{2}\cdots
p_{n}}|e_{p_{1}}^{(1)}e_{p_{2}}^{(2)}\cdots e_{p_{n}}^{(n)}\rangle
\label{pur}
\end{equation}%
be a generic multipartite pure state in a Hilbert space $\mathcal{H}\cong
\bigotimes_{k_{i}}\mathbb{C}_{i}^{k_{i}}$ of dimensionality $\dim \left(
\mathcal{H}\right) =\prod_{i=1}^{n}k_{i}$. Each set $\{|e_{p_{i}}^{(i)}%
\rangle :p_{i}=1,2,...,k_{i}\}$ is a local basis for the $i$-th subspace $%
\mathbb{C}_{i}^{k_{i}}$. A pure state in $\mathcal{H}$ is said to be \emph{a
product state} if it can be written in the form%
\begin{equation*}
|\phi \rangle =\dbigotimes\limits_{i=1}^{n}|\phi ^{(i)}\rangle \equiv
\dbigotimes\limits_{i=1}^{n}\sum_{p_{i}}\left(
c_{p_{i}}^{(i)}|e_{p_{i}}^{(i)}\rangle \right) ,
\end{equation*}%
where $|\phi ^{(i)}\rangle \in \mathbb{C}_{i}^{k_{i}}$ is some pure state; $%
|\psi \rangle $ is said to be \emph{entangled}, otherwise. Given a state $%
|\psi \rangle $ as in Eq.~(\ref{pur}), let us define%
\begin{equation*}
\Lambda _{\max }(\psi ):=\max_{|\phi \rangle }|\langle \phi |\psi \rangle |,
\end{equation*}%
where the maximization is performed over all product states $|\phi \rangle
\in \mathcal{H}$. This formula tells us how well the possibly entangled
state $|\psi \rangle $ can be approximated by a product state. The formula
provides a method that satisfies the \emph{desiderata} for a well-defined
measure of entanglement~\cite{horo}. Such a method is usually called \emph{%
geometric measure}. The terminology is justified since $\Lambda _{\max }({%
\psi })$ is an angle between two vectors. Hence, notice that the amount of
entanglement increases while $\Lambda _{\max }({\psi })$ decreases and
therefore the quantity of entanglement depends essentially on $\Lambda
_{\max }(\psi )$. Given a state $|\psi \rangle $, concrete geometric
measures are%
\begin{equation*}
\begin{tabular}{lll}
$E_{\sin ^{2}}(\psi ):=1-\Lambda _{\max }^{2}({\psi })$ & and & $E_{\log
}(\psi ):=-2\log _{2}\Lambda _{\max }({\psi }),$%
\end{tabular}%
\end{equation*}%
introduced in~\cite{wg03} and~\cite{we04}, respectively. The relation of the
geometric measure to other measures has been studied in ~cite{hm06, hm06,
we08}. In the next section, we will show how the geometric measure can be
related to the permanent.

\section{Permanent and entanglement}

Here we discuss quantum states with
symmetry~\cite{hm08,hai,hu,we08,we04,we10}, in particular, permutation
symmetry. A \emph{permutation invariant basis state} is a pure state of the
form (see e.g., Refs.~\cite{we08,we04})
\begin{equation*}
\begin{tabular}{lll}
$|S(n,\vec{k})\rangle =\dfrac{\sqrt{C_{\vec{k}}^{n}}}{n!}\dsum\limits_{\pi
_{i}\in S_{n}}|\pi _{i}(\underbrace{1,..,1}_{k_{1}},\underbrace{2,..,2}%
_{k_{2}},..,\underbrace{d,..,d}_{k_{d}})\rangle ,$ & where & $C_{\vec{k}%
}^{n}:=n!/\dprod\limits_{i=1}^{d}k_{i}!.$%
\end{tabular}%
\end{equation*}

As an example, we consider the permutation invariant basis state%
\begin{equation*}
|S(4,\left( 2,2\right) )\rangle =\frac{\sqrt{6}}{4!}\sum_{\pi _{i}\in
S_{4}}|\pi _{i}(\underbrace{1,1}_{2},\underbrace{2,2}_{2})\rangle =\frac{1}{%
\sqrt{6}}\left( |1122\rangle +|1212\rangle +|2112\rangle +|1221\rangle
+|2121\rangle +|2211\rangle \right) .
\end{equation*}%
Here $n=4$ and $d=2$. Theorem \ref{th1} is our main result concerning these
class of states:

\begin{theorem}
\label{th1}Let $|S(n,\vec{k})\rangle $ be a permutation invariant basis
state. Then%
\begin{equation*}
\Lambda _{\max }\left( S(n,\vec{k})\right) =\sqrt{\frac{n!}{%
\prod_{i=1}^{d}k_{i}!}}\prod_{i=1;k_{i}\neq 0}^{d}\left( \frac{k_{i}}{n}%
\right) ^{\frac{k_{i}}{2}}.
\end{equation*}
\end{theorem}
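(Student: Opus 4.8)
The plan is to reduce the optimization over all product states to an optimization over symmetric product states, and then recognize the resulting objective as a permanent which can be bounded by the Carlen--Loss--Lieb inequality.

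First I would invoke the standard fact that for a permutation invariant state the closest product state may be taken to be symmetric, i.e. of the form $|\phi\rangle^{\otimes n}$ with $|\phi\rangle = \sum_{j=1}^d c_j |j\rangle \in \mathbb{C}^d$ and $\sum_j |c_j|^2 = 1$. (This is the content of the historic remark and Conjecture 1 alluded to in the introduction; I would cite that it holds for symmetric basis states, where all expansion coefficients are nonnegative.) Under this reduction, $\langle \phi^{\otimes n} | S(n,\vec k)\rangle$ becomes, after expanding the symmetrized ket, a sum over the $C_{\vec k}^n = n!/\prod_i k_i!$ distinct permutations of the multi-index $(1^{k_1},\dots,d^{k_d})$ of the product $\prod_i c_i^{k_i}$, all equal. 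Collecting the combinatorial prefactors $\sqrt{C_{\vec k}^n}/n!$ from the definition of $|S(n,\vec k)\rangle$, the overlap simplifies to $\sqrt{C_{\vec k}^n}\,\prod_{i=1}^d c_i^{k_i}$, or equivalently $\Lambda_{\max} = \sqrt{C_{\vec k}^n}\,\max \prod_i c_i^{k_i}$ subject to $\sum_i c_i^2 = 1$ (one checks the optimal $c_i$ are real and nonnegative).

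Next I would cast this as a permanent problem to make the connection advertised in the abstract explicit: form the $n\times n$ matrix $M$ whose rows are grouped into blocks of sizes $k_1,\dots,k_d$, with every entry in the $i$-th block of columns equal to $c_i$ (so row $r$ in block $i$ is the vector with $c_i$ repeated in the appropriate columns); then $\mathrm{perm}(M) = \prod_i k_i! \cdot \prod_i c_i^{k_i}$ up to the combinatorial factor, and the constraint $\sum_i c_i^2 = 1$ says the columns (or rows) have unit $\ell_2$-norm. The Carlen--Loss--Lieb inequality then gives the sharp upper bound on $\mathrm{perm}(M)$ over such column-normalized matrices, and the equality case identifies the maximizer. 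Tracking constants, the maximizer is $c_i^2 = k_i/n$ for each $i$ with $k_i \neq 0$, which substituted back yields $\prod_{i;k_i\neq 0} (k_i/n)^{k_i/2}$, and multiplying by $\sqrt{n!/\prod_i k_i!}$ gives the claimed closed form.

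Alternatively, and perhaps more transparently for this single display, the last optimization $\max \prod_i c_i^{k_i}$ with $\sum c_i^2 = 1$ can be done directly by Lagrange multipliers (or weighted AM--GM applied to $\prod_i (c_i^2)^{k_i}$ with weights $k_i/n$), again giving $c_i^2 = k_i/n$; I would present the permanent route as the conceptual statement and this elementary computation as the verification. The main obstacle is the first step: justifying that the global maximum over \emph{all} product states in $\bigotimes_i \mathbb{C}^d$ is attained at a symmetric one. For the basis states $|S(n,\vec k)\rangle$ the coefficients in the symmetric-basis expansion are nonnegative, so this follows from the positive-coefficient case of the symmetry result (cited above); this is exactly where the Carlen--Loss--Lieb permanent inequality does real work, since it is what lets one bypass the a priori nonconvex optimization over non-symmetric product states. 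Everything after that reduction is bookkeeping of binomial-type constants.
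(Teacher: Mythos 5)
Your final optimization and the closed form are right, but the one step that carries all the weight --- reducing the maximization over \emph{arbitrary} product states $\bigotimes_j\bigl(\sum_l \alpha_{j,l}|l\rangle_j\bigr)$ to symmetric ones $|\phi\rangle^{\otimes n}$ --- is not actually proved in your write-up; you defer it to a cited ``symmetry result for positive coefficients.'' In this paper that result is not an available black box: the positive-coefficient symmetry statement is itself \emph{derived} from Theorem 1's argument (it appears after the theorem, using the same inequality), so invoking it here is circular within the paper's logic; citing the external proofs (Hayashi et al., H\"ubener et al.) would be legitimate but replaces the entire content of the proof with a reference. Worse, the place where you deploy the Carlen--Loss--Lieb inequality is misdirected: your matrix $M$ is built from the \emph{already symmetric} ansatz, so it has constant columns, $\mathrm{perm}(M)=n!\prod_i c_i^{k_i}$ trivially, and CLL holds there with equality --- it does no work. (Also, the columns of $M$ have norm $\sqrt{n}\,|c_i|$, not $1$, so the constraint $\sum_i c_i^2=1$ is not a column-normalization of $M$.)

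The paper's proof applies CLL at the earlier stage, and that is exactly what supplies the missing reduction. For a general, non-symmetric product state one gets $\langle S(n,\vec k)|\phi\rangle=\frac{\sqrt{C^n_{\vec k}}}{n!}\,\mathrm{per}(A_{\vec k})$, where $A_{\vec k}$ has $k_i$ copies of the column $\vec v_i=(\alpha_{1,i},\dots,\alpha_{n,i})^T$; here the \emph{rows} are unit vectors, so $\sum_i k_i$-weighted column norms satisfy $\sum_i \|\vec v_i\|_2^2=n$. CLL gives $|\mathrm{per}(A_{\vec k})|\le \frac{n!}{n^{n/2}}\prod_i\|\vec v_i\|_2^{k_i}=n!\prod_i\bar\alpha_i^{k_i}$ with $\bar\alpha_i=\|\vec v_i\|_2/\sqrt{n}$ and $\sum_i\bar\alpha_i^2=1$, and the right-hand side is precisely the overlap with the symmetric product state built from the $\bar\alpha_i$. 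That single application of the inequality \emph{is} the reduction to the symmetric ansatz; after it, your Lagrange/AM--GM computation $c_i^2=k_i/n$ finishes the job as in the paper. So the architecture you describe in your closing paragraph is the correct one, but you need to carry it out on $A_{\vec k}$ rather than on $M$, and you cannot lean on the symmetry result as an external fact.
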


\begin{proof}
We prove the statement by comparing the possibly entangled state $|S(n,\vec{k%
})\rangle $ to the general product states%
\begin{equation*}
\begin{tabular}{lll}
$|\phi \rangle =\dbigotimes\limits_{j=1}^{n}\left( \dsum\limits_{l=1}^{d}{%
\alpha }_{j,l}|l\rangle _{j}\right) $ & with & $\dsum\limits_{l=1}^{d}|%
\alpha _{j,l}|^{2}=1.$%
\end{tabular}%
\end{equation*}%
According to the definition of geometric measure, the first step is to
evaluate the overlap
\begin{equation*}
\phi _{\vec{k}}:=\langle S(n;\vec{k})|\phi \rangle ,
\end{equation*}%
which gives
\begin{equation}
\phi _{\vec{k}}=\dfrac{\sqrt{C_{\vec{k}}^{n}}}{n!}\sum_{\pi _{i}\in
S_{n}}\alpha _{\pi _{i}(1),1}\dots \alpha _{\pi _{i}(k_{1}),1}\alpha _{\pi
_{i}(k_{1}+1),2}\dots \alpha _{\pi _{i}(n),d}=\dfrac{\sqrt{C_{\vec{k}}^{n}}}{%
n!}\text{per}(A_{\vec{k}}),  \label{perman}
\end{equation}%
where $A_{\vec{k}}$ is an $n\times n$ matrix defined as follows:
\begin{equation*}
A_{\vec{k}}:=\left[ \underbrace{%
\begin{pmatrix}
\alpha _{1,1} \\
\alpha _{2,1} \\
\vdots \\
\alpha _{n,1}%
\end{pmatrix}%
\cdots
\begin{pmatrix}
\alpha _{1,1} \\
\alpha _{2,1} \\
\vdots \\
\alpha _{n,1}%
\end{pmatrix}%
}_{k_{1}}\underbrace{%
\begin{pmatrix}
\alpha _{1,2} \\
\alpha _{2,2} \\
\vdots \\
\alpha _{n,2}%
\end{pmatrix}%
\cdots
\begin{pmatrix}
\alpha _{1,2} \\
\alpha _{2,2} \\
\vdots \\
\alpha _{n,2}%
\end{pmatrix}%
}_{k_{2}}\cdots \underbrace{%
\begin{pmatrix}
\alpha _{1,d} \\
\alpha _{2,d} \\
\vdots \\
\alpha _{n,d}%
\end{pmatrix}%
\cdots
\begin{pmatrix}
\alpha _{1,d} \\
\alpha _{2,d} \\
\vdots \\
\alpha _{n,d}%
\end{pmatrix}%
}_{k_{d}}\right] .
\end{equation*}%
The matrix $A_{\vec{k}}$ has $k_{i}$ identical columns $\vec{v}_{i}=(\alpha
_{1,i},\alpha _{2,i},\dots ,\alpha _{n,i})^{T}$ with $i=1,\dots ,d$ and $%
\sum_{i=1}^{d}k_{i}=n$. The next step consists of maximizing the absolute
value of the overlap, \emph{i.e.}, $|\phi _{\vec{k}}|$, over the set of all $%
\alpha _{j,l}$'s such that $\sum_{l=1}^{d}|\alpha _{j,l}|^{2}=1$. On the
basis of Eq. (\ref{perman}), we can write%
\begin{equation*}
\Lambda _{\max }\left( S(n,\vec{k})\right) =\max_{\alpha _{j,l}}\frac{\sqrt{%
C_{\vec{k}}^{n}}}{n!}\left\vert \text{per}(A_{\vec{k}})\right\vert .
\end{equation*}%
To deal with this equation, we will use a recent result by Carlen, Loss and
Lieb \cite{clb06} (see also Samorodnitsky~\cite{sa08}). For any matrix $F$
defined as
\begin{equation*}
F:=[\vec{f}_{1},\vec{f}_{2},\dots ,\vec{f}_{n}],
\end{equation*}%
where $\vec{f}_{1},\vec{f}_{2},\dots ,\vec{f}_{n}$ are arbitrary column
vectors of dimension $n$, they have shown that
\begin{equation}
|\text{per}(F)|\leq \frac{n!}{n^{n/2}}\prod_{i=1}^{n}||\vec{f}_{i}||_{2},
\label{main}
\end{equation}%
where $||\vec{f}_{i}||_{2}$ denotes the $L_{2}$-norm. The r.h.s. of the
inequality can also be regarded as the permanent of a matrix whose $i$-th
column contains only identical entries $|\vec{f}_{i}|/\sqrt{n}$ . For
example,
\begin{equation*}
\frac{1}{\sqrt{n}}\left[
\begin{pmatrix}
|\vec{f}_{1}| \\
|\vec{f}_{1}| \\
\vdots \\
|\vec{f}_{1}|%
\end{pmatrix}%
\begin{pmatrix}
|\vec{f}_{2}| \\
|\vec{f}_{2}| \\
\vdots \\
|\vec{f}_{2}|%
\end{pmatrix}%
\cdots
\begin{pmatrix}
|\vec{f}_{n}| \\
|\vec{f}_{n}| \\
\vdots \\
|\vec{f}_{n}|%
\end{pmatrix}%
\right] .
\end{equation*}%
By applying the inequality in Eq. (\ref{main}), we obtain%
\begin{equation}
|\phi _{\vec{k}}|\leq \frac{\sqrt{C_{\vec{k}}^{n}}}{n!}\text{perm}(\overline{%
A}_{\vec{k}})=\sqrt{C_{\vec{k}}^{n}}\prod_{i=1}^{n}\bar{\alpha}_{i}^{k_{i}},
\label{some}
\end{equation}%
where%
\begin{equation}
\bar{\alpha}_{l}:=\sqrt{\frac{1}{n}\sum_{j=1}^{n}|\alpha _{j,l}|^{2}},
\label{avera}
\end{equation}%
with the property that $\sum_{i=1}^{d}\bar{\alpha}_{i}^{2}=1$ and
\begin{equation*}
\overline{A}_{\vec{k}}:=\left[ \underbrace{%
\begin{pmatrix}
\bar{\alpha}_{1} \\
\bar{\alpha}_{1} \\
\vdots \\
\bar{\alpha}_{1}%
\end{pmatrix}%
\cdots
\begin{pmatrix}
\bar{\alpha}_{1} \\
\bar{\alpha}_{1} \\
\vdots \\
\bar{\alpha}_{1}%
\end{pmatrix}%
}_{k_{1}}\underbrace{%
\begin{pmatrix}
\bar{\alpha}_{2} \\
\bar{\alpha}_{2} \\
\vdots \\
\bar{\alpha}_{2}%
\end{pmatrix}%
\cdots
\begin{pmatrix}
\bar{\alpha}_{2} \\
\bar{\alpha}_{2} \\
\vdots \\
\bar{\alpha}_{2}%
\end{pmatrix}%
}_{k_{2}}\cdots \underbrace{%
\begin{pmatrix}
\bar{\alpha}_{d} \\
\bar{\alpha}_{d} \\
\vdots \\
\bar{\alpha}_{d}%
\end{pmatrix}%
\cdots
\begin{pmatrix}
\bar{\alpha}_{d} \\
\bar{\alpha}_{d} \\
\vdots \\
\bar{\alpha}_{d}%
\end{pmatrix}%
}_{k_{d}}\right] .
\end{equation*}%
By Eq. (\ref{some}), we have%
\begin{equation*}
\Lambda _{\max }\left( S(n,\vec{k})\right) =\max_{|\phi \rangle }\phi _{\vec{%
k}}\leq \max_{\bar{\alpha}_{i}\in \mathbb{R}^{+}}\sqrt{C_{\vec{k}}^{n}}%
\prod_{i=1}^{n}\bar{\alpha}_{i}^{k_{i}}=\max_{\phi _{S}}\langle \phi
_{S}|S(n;\vec{k})\rangle ,
\end{equation*}%
where
\begin{equation}
|\phi _{S}\rangle =\dbigotimes\limits_{j=1}^{n}\left( \sum_{l=1}^{d}\bar{%
\alpha}_{l}|l\rangle \right) _{j}=\sum_{\vec{k}}\sqrt{C_{\vec{k}}^{n}}\,\bar{%
\alpha}_{1}^{k_{1}}\dots \bar{\alpha}_{d}^{k_{d}}|S(n;\vec{k})\rangle .
\label{pro}
\end{equation}%
The interpretation is that there is a product state constructed from $|\phi
\rangle $ by appropriately averaging the coefficients as in Eq. (\ref{avera}%
) and that the derived product state has a larger overlap. The resulting $%
|\phi _{S}\rangle $ is a tensor product of $n$ copies of the same state. The
last step is a simple maximization procedure. We need to maximize the
function%
\begin{equation*}
f\left( x_{1},x_{2},...,x_{d}\right) =\sqrt{C_{\vec{k}}^{n}}%
\prod_{i=1}^{n}x_{i}^{k_{i}}
\end{equation*}%
with nonnegative domain, under the constraint $\sum_{i=1}^{d}x_{i}^{2}=1$.
This gives
\begin{equation*}
\Lambda _{\max }\left( S(n,\vec{k})\right) =\sqrt{C_{\vec{k}}^{n}}%
\,\prod_{i=1;k_{i}\neq 0}^{d}\left( \frac{k_{i}}{n}\right) ^{\frac{k_{i}}{2}%
},
\end{equation*}%
which verifies the statement.
\end{proof}

Note that in order to find $\Lambda _{\max }(|S(n,\vec{k})\rangle )$, it is
sufficient to use the state $|\phi \rangle $ to be the product of $n$
identical copies of an arbitrary single-party state $|\alpha \rangle $,
\emph{i.e.},
\begin{equation}
|\phi \rangle =\otimes _{i=1}^{n}|\alpha \rangle .  \label{form}
\end{equation}

When the number of levels $d$ is equal to the number of parties $n$ and $%
k_{i}=1$, for every $i=1,2,...,n$, we have%
\begin{equation*}
A_{\vec{k}}:=\left[
\begin{pmatrix}
\alpha _{1,1} \\
\alpha _{2,1} \\
\vdots \\
\alpha _{d,1}%
\end{pmatrix}%
\begin{pmatrix}
\alpha _{1,2} \\
\alpha _{2,2} \\
\vdots \\
\alpha _{d,2}%
\end{pmatrix}%
\cdots
\begin{pmatrix}
\alpha _{1,d} \\
\alpha _{2,d} \\
\vdots \\
\alpha _{d,d}%
\end{pmatrix}%
\right] .
\end{equation*}%
The form of the matrix $A_{\vec{k}}$ is generic. The only constraint is that
each column has unit norm. In this case,
\begin{equation*}
\Lambda _{\max }(S(d,(\underset{d}{\underbrace{1,1,...,1}})))=\sqrt{d!}%
\left( \frac{1}{d}\right) ^{\frac{d}{2}}.
\end{equation*}

For the case of qubits, namely when $d=2$, we can prove the corresponding
result without using the Carlen-Lieb-Loss inequality, but the older Schwarz
and McClaurin inequalities. In this case, the permutation invariant states
have the form
\begin{equation*}
\begin{tabular}{lll}
$|S(n,k)\rangle =\frac{1}{\sqrt{C_{k}^{n}}}\sum_{\pi _{i}\in S_{n}}|\pi _{i}(%
\underbrace{0,..,0}_{k},\underbrace{1,..1)}_{n-k}\rangle ,$ & where & $%
C_{k}^{n}:=\frac{n!}{k!\left( n-k\right) !}.$%
\end{tabular}%
\end{equation*}%
The theorem below states the connected result:

\begin{theorem}
\label{th2}Let $|S(n,k)\rangle $ be a permutation invariant basis state for
qubits. Then%
\begin{equation*}
\Lambda _{\max }(n,k)=\sqrt{\frac{n!}{k!\left( n-k\right) !}}\left( \frac{k}{%
n}\right) ^{\frac{k}{2}}\left( \frac{n-k}{n}\right) ^{\frac{n\!-\!k}{2}}
\end{equation*}
\end{theorem}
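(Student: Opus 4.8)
The plan is to derive Theorem~\ref{th2} as the special case $d=2$ of Theorem~\ref{th1}, but replacing the Carlen--Loss--Lieb input with the more elementary Cauchy--Schwarz and Maclaurin inequalities as the referenced remark promises. First I would set up notation consistent with the qubit parametrization: write a general single-party state as $|\alpha\rangle = a|0\rangle + b|1\rangle$ with $|a|^2 + |b|^2 = 1$, and, using the reduction noted after Theorem~\ref{th1} (Eq.~(\ref{form})), take the trial product state to be $|\phi\rangle = |\alpha\rangle^{\otimes n}$. Actually, to make the ``older inequalities'' argument self-contained, I would start from the fully general product state $|\phi\rangle = \bigotimes_{j=1}^n (a_j|0\rangle + b_j|1\rangle)$ with $|a_j|^2+|b_j|^2 = 1$, so that $|S(n,k)\rangle$ having $k$ zeros and $n-k$ ones gives the overlap
\begin{equation*}
\langle S(n,k)|\phi\rangle = \frac{1}{\sqrt{C_k^n}}\sum_{\pi\in S_n} a_{\pi(1)}\cdots a_{\pi(k)}\, b_{\pi(k+1)}\cdots b_{\pi(n)},
\end{equation*}
which up to the prefactor is the permanent of the $n\times n$ matrix with $k$ columns equal to $(a_1,\dots,a_n)^T$ and $n-k$ columns equal to $(b_1,\dots,b_n)^T$; equivalently it is (up to a multinomial factor) the elementary symmetric-type sum $\sum a_{i_1}\cdots a_{i_k} b_{j_1}\cdots b_{j_{n-k}}$ over ordered partitions of $\{1,\dots,n\}$.

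Next I would bound $|a_{\pi(1)}\cdots a_{\pi(k)}b_{\pi(k+1)}\cdots b_{\pi(n)}|$ termwise: by Cauchy--Schwarz (applied to the vectors of moduli), summing over $\pi$ reduces the problem to the case where all $a_j = a$ and all $b_j = b$ are equal to their root-mean-square values $\bar a = \sqrt{\tfrac1n\sum|a_j|^2}$, $\bar b = \sqrt{\tfrac1n\sum|b_j|^2}$, exactly the averaging step of Theorem~\ref{th1} but now justified by Cauchy--Schwarz rather than the general permanent inequality. Concretely, after averaging, $|\langle S(n,k)|\phi\rangle| \le \sqrt{C_k^n}\, \bar a^{\,k}\bar b^{\,n-k}$ with $\bar a^2 + \bar b^2 = 1$. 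The remaining one-variable maximization of $g(x) = \sqrt{C_k^n}\, x^k (1-x^2)^{(n-k)/2}$ over $x\in[0,1]$ is elementary (set the derivative to zero: $x^2 = k/n$), yielding the claimed value $\sqrt{C_k^n}\,(k/n)^{k/2}((n-k)/n)^{(n-k)/2}$. Finally, to see the bound is attained, plug $a_j = \sqrt{k/n}$, $b_j = \sqrt{(n-k)/n}$ for all $j$ into the overlap; since all entries are equal and nonnegative the permanent equals $n!\,\bar a^k\bar b^{n-k}$ and equality holds throughout, so $\Lambda_{\max}(n,k)$ equals the upper bound.

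The role of the Maclaurin inequality is to handle the averaging step cleanly: the expression $\tfrac{1}{n!}\,\mathrm{per}(A)$ for a matrix with $k$ identical columns $\vec v$ and $n-k$ identical columns $\vec w$ is, up to the normalization $1/C_k^n = (k!(n-k)!)/n!$, a mixed symmetric function $\sum_{|I|=k} \big(\prod_{i\in I}v_i\big)\big(\prod_{j\notin I}w_j\big)$, and Maclaurin's inequality (the normalized elementary symmetric means are monotone) gives the bound by the value at the common mean, which is precisely what replaces Carlen--Loss--Lieb here. The main obstacle I anticipate is not any single inequality but bookkeeping: making sure the combinatorial prefactors ($C_k^n$ versus the number of permutations fixing the block structure, $k!(n-k)!$) are tracked correctly when passing between the permanent, the symmetric-function sum, and the averaged quantity, and checking that Cauchy--Schwarz/Maclaurin is applied to the moduli in the right grouping so that equality is genuinely achievable by a single repeated single-qubit state of the stated form. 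Once the prefactors are pinned down, the optimization and the equality case are routine calculus.
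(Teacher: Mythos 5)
Your proposal follows essentially the same route as the paper's proof: start from a general product state, apply Cauchy--Schwarz to separate the $|0\rangle$- and $|1\rangle$-amplitude blocks of the permutation sum, apply Maclaurin's inequality to replace each block by the corresponding root-mean-square value, and finish with a one-variable maximization. The only (harmless) difference is cosmetic --- your permanent/symmetric-function framing and your explicit verification that the bound is attained by the uniform product state, a step the paper leaves implicit.
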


\begin{proof}
We want to find the maximal overlap between $|S(n,k)\rangle $ with product
states
\begin{equation*}
|\phi \rangle =\otimes _{j=1}^{n}(\sqrt{q_{j}}|0\rangle +\sqrt{1-q_{j}}%
e^{i\beta _{j}}|1\rangle ).
\end{equation*}%
As the coefficients in $|S(n,k)\rangle $ are nonnegative, we can set $\beta
_{j}=0$. We then evaluate%
\begin{equation*}
\phi _{k}\equiv \langle S(n,k)|\phi \rangle =\frac{\sqrt{C_{k}^{n}}}{n!}%
\sum_{\pi _{i}\in S_{n}}\left( \prod_{l=1}^{k}\sqrt{q_{\pi _{i}(l)}}%
\prod_{l=k+1}^{n}\sqrt{1-q_{\pi _{i}(k+1)}}\right) .
\end{equation*}%
Using the Cauchy-Schwarz inequality, we have
\begin{equation*}
|\phi _{k}|^{2}\leq \frac{C_{k}^{n}}{(n!)^{2}}\left( \sum_{\pi _{i}\in
S_{n}}\prod_{l=1}^{k}q_{\pi _{i}(l)}\right) \left( \sum_{\pi _{i}\in
S_{n}}\prod_{l=k+1}^{n}1-q_{\pi _{i}(k+1)}\right) .
\end{equation*}%
By the Maclaurin inequality%
\begin{equation*}
\frac{1}{n!}\sum_{\pi _{i}\in S_{n}}\prod_{l=1}^{k}x_{\pi _{i}(l)}\leq
\left( \frac{1}{n}\sum_{i=1}^{n}x_{i}\right) ^{k},
\end{equation*}%
we arrive at
\begin{equation*}
|\phi _{k}|^{2}\leq C_{k}^{n}(\bar{q})^{k}(1-\bar{q})^{n-k}=C_{k}^{n}\cos
^{2k}\theta \sin ^{2(n-k)}\theta ,
\end{equation*}%
for $\cos ^{2}\theta =\bar{q}$. This means that
\begin{equation*}
|\phi _{k}|\leq \sqrt{C_{k}^{n}}(\bar{q})^{k/2}(1-\bar{q})^{(n-k)/2}=\sqrt{%
C_{k}^{n}}\cos ^{k}\theta \sin ^{(n-k)}\theta .
\end{equation*}%
Maximizing the expression on the r.h.s. over the angle $\theta $, we obtain
\begin{eqnarray*}
\Lambda _{\max }(n,k) &=&\max_{\theta }\sqrt{C_{k}^{n}}\cos ^{k}\theta \sin
^{(n-k)}\theta \\
&=&\sqrt{\frac{n!}{k!\left( n-k\right) !}}\left( \frac{k}{n}\right) ^{\frac{k%
}{2}}\left( \frac{n-k}{n}\right) ^{\frac{n\!-\!k}{2}}.
\end{eqnarray*}%
This concludes the proof.
\end{proof}

For the case of three qubits, some examples of permutation invariant states
are the following ones:
\begin{equation*}
\begin{tabular}{lll}
$|\mathrm{GHZ}\rangle \equiv (|000\rangle +|111\rangle )/\sqrt{2},$ & $|%
\mathrm{W}\rangle \equiv (|001\rangle +|010\rangle +|100\rangle )/\sqrt{3},$
& $|\mathrm{\bar{W}}\rangle \equiv (|110\rangle +|101\rangle +|011\rangle )/%
\sqrt{3}.$%
\end{tabular}%
\end{equation*}%
The states $|\mathrm{GHZ}\rangle $ and $|\mathrm{W}\rangle $ have extremal
properties and have particularly important roles in quantum mechanics \cite%
{horo}: the \emph{Greenberger--Horne--Zeilinger state}, $|\mathrm{GHZ}%
\rangle $, was used to test Bell's inequalities; the \emph{W state}, $|%
\mathrm{W}\rangle $, exhibits genuine three-party entanglement, in a
different way from $|\mathrm{GHZ}\rangle $. By applying Theorem \ref{th2},
we have
\begin{equation*}
\begin{tabular}{lll}
$\Lambda _{\max }(\mathrm{GHZ})=1/\sqrt{2}$ & and & $\Lambda _{\max }(%
\mathrm{W})=\Lambda _{\max }(\mathrm{\bar{W}})=2/3.$%
\end{tabular}%
\end{equation*}

Let us now consider examples for three-party $4$-level systems, namely $n=3$
and $d=4$. The chosen vectors are $\vec{a}=(2,0,0,1)$ and $\vec{b}=(1,1,1,0)$
and their corresponding states are
\begin{equation*}
\begin{tabular}{lll}
$|\vec{a}\rangle \equiv \frac{1}{\sqrt{3}}(|114\rangle +|141\rangle
+|411\rangle ),$ & and & $|\vec{b}\rangle \equiv \frac{1}{\sqrt{6}}%
(|123\rangle +|132\rangle +|213\rangle +|231\rangle +|312\rangle
+|321\rangle ).$%
\end{tabular}%
\end{equation*}%
With the use of Theorem \ref{th1}, we have
\begin{equation*}
\begin{tabular}{lll}
$\Lambda _{\max }(\vec{a})=\Lambda _{\max }(\mathrm{W})=2/3$ & and & $%
\Lambda _{\max }(\vec{b})=\sqrt{2}/3$.%
\end{tabular}%
\end{equation*}%
Note that the states $|\vec{a}\rangle $ and $|\mathrm{W}\rangle $ have the
same structure.

We conclude by asking the following question and then providing a partial
answer:

\begin{conjecture}
\label{problem}Is it true that in order to obtain the maximal overlap of any
permutation invariant state, we can assume the product state to be a tensor
product of the same single-party state?
\end{conjecture}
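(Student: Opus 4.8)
The plan is to prove the conjecture first in the regime that the paper's machinery already reaches, and then to isolate the one step that resists it. Write a permutation invariant state as $|\psi\rangle=\sum_{\vec{k}}c_{\vec{k}}\,|S(n,\vec{k})\rangle$ and suppose to begin with that every $c_{\vec{k}}\ge 0$ (equivalently, all computational-basis coefficients are nonnegative). Let $|\phi\rangle=\bigotimes_{j=1}^{n}\big(\sum_{l=1}^{d}\alpha_{j,l}|l\rangle_{j}\big)$ be a product state of maximal overlap. By Eq.~(\ref{perman}), $\langle\psi|\phi\rangle=\sum_{\vec{k}}c_{\vec{k}}\frac{\sqrt{C_{\vec{k}}^{n}}}{n!}\,\mathrm{per}(A_{\vec{k}})$; since the $c_{\vec{k}}$ are nonnegative and $|\mathrm{per}(A_{\vec{k}})|$ can only grow when each $\alpha_{j,l}$ is replaced by $|\alpha_{j,l}|$ (the permanent of an entrywise-nonnegative matrix dominates), I may assume $\alpha_{j,l}\ge 0$ throughout. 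Applying the Carlen--Loss--Lieb inequality (\ref{main}) to each $A_{\vec{k}}$ exactly as in the proof of Theorem~\ref{th1} gives $\mathrm{per}(A_{\vec{k}})\le n!\prod_{l}\bar{\alpha}_{l}^{k_{l}}$ with $\bar{\alpha}_{l}$ the averaged amplitudes of Eq.~(\ref{avera}); summing against $c_{\vec{k}}\ge 0$ yields $\langle\psi|\phi\rangle\le\sum_{\vec{k}}c_{\vec{k}}\sqrt{C_{\vec{k}}^{n}}\,\bar{\alpha}_{1}^{k_{1}}\cdots\bar{\alpha}_{d}^{k_{d}}=\langle\psi|\phi_{S}\rangle$, where $|\phi_{S}\rangle=\bigotimes_{j=1}^{n}\big(\sum_{l}\bar{\alpha}_{l}|l\rangle\big)_{j}$ is a tensor power of a single state (cf. Eq.~(\ref{pro})), normalized because $\sum_{l}\bar{\alpha}_{l}^{2}=1$. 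This settles the conjecture whenever $|\psi\rangle$ admits a nonnegative symmetric-basis expansion.

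For a general (possibly complex) symmetric state the absolute-value step is no longer available, and I would recast the problem. The overlap $T(x_{1},\dots,x_{n}):=\langle\psi|\,x_{1}\otimes\cdots\otimes x_{n}\rangle$ is an $n$-linear functional on $\mathbb{C}^{d}$, and it is symmetric in its arguments exactly because $P_{\sigma}|\psi\rangle=|\psi\rangle$ for all $\sigma\in S_{n}$. Then $\Lambda_{\max}(\psi)=\sup\{|T(x_{1},\dots,x_{n})|:\|x_{i}\|=1\}$, whereas the right-hand side of the conjecture is $\sup\{|T(x,\dots,x)|:\|x\|=1\}$, the sup-norm on the unit sphere of the associated homogeneous polynomial. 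So the conjecture is equivalent to: for a symmetric $n$-linear form on a finite-dimensional Hilbert space, the multilinear norm equals the polynomial norm. For $n=2$ this is true and elementary: writing $T(x,y)=x^{T}My$ with $M$ a complex symmetric matrix, the Autonne--Takagi factorization $M=U\Sigma U^{T}$ ($U$ unitary, $\Sigma\ge 0$ diagonal) shows that $x=\overline{u_{1}}$ attains $|x^{T}Mx|=\sigma_{\max}(M)=\sup_{\|x\|=\|y\|=1}|x^{T}My|$; this also recovers the structure underlying Theorem~\ref{th2}.

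For $n\ge 3$ I would attempt an exchange argument: starting from an optimal product state $|a_{1}\rangle\otimes\cdots\otimes|a_{n}\rangle$ with $\langle\psi|\phi\rangle=\Lambda_{\max}>0$, freeze all but two tensor slots, observe that contracting $|\psi\rangle$ against the frozen factors leaves a symmetric bilinear form in the remaining two slots, and use the $n=2$ result to replace those two factors by one repeated vector without lowering the overlap; then iterate over pairs. The step I expect to be the main obstacle is precisely the termination of this process: equalizing one pair can disturb a pair made equal earlier, so a genuine compactness or global variational argument on the (compact) set of optimal product states is needed to produce a fully symmetric optimizer. Equivalently, one must show that the ``polarization constant'' of a Hilbert space equals $1$ for symmetric forms --- a fact that is known within the theory of homogeneous polynomials on Banach spaces, but whose elementary proof is the real content of the general conjecture.
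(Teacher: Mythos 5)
Your first paragraph reproduces, correctly, the argument the paper itself gives for this conjecture: for nonnegative $c_{\vec{k}}$ one passes to nonnegative $\alpha_{j,l}$, applies the Carlen--Loss--Lieb bound (\ref{main}) termwise to $\mathrm{per}(A_{\vec{k}})$, and sums against the nonnegative coefficients to get $\langle\phi|\psi\rangle\le\langle\phi_S|\psi\rangle$ with $|\phi_S\rangle$ the tensor power built from the averaged amplitudes of Eq.~(\ref{avera}). Your justification of the reduction to nonnegative $\alpha_{j,l}$ (triangle inequality on the permanent expansion, preserving the column normalization) is in fact more explicit than the paper's one-line remark. Where you diverge from the paper: you omit its cheap extension of the nonnegative case to all $U^{\otimes n}$-images of such states (which, by the paper's parameter count, already covers essentially every symmetric three-qubit state), but you add something the paper does not attempt, namely the reformulation of the general conjecture as the statement that the multilinear norm of a symmetric $n$-linear form on a Hilbert space equals its norm on the diagonal (polarization constant $1$), plus a complete and correct $n=2$ proof via the Autonne--Takagi factorization. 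That reformulation is exactly the framework of the references \cite{di,hu} to which the paper defers for the full proof, so it is the right lens; the $n=2$ computation $\overline{u_1}^{T}U\Sigma U^{T}\overline{u_1}=\sigma_{\max}$ checks out.

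The genuine gap is the one you name yourself: for $n\ge 3$ with complex coefficients, the pairwise exchange argument does not terminate, because equalizing slots $(1,3)$ can destroy the equality of slots $(1,2)$ achieved in the previous step, and you supply no compactness or monotonicity argument to force convergence to a fully symmetric maximizer. To be fair, the paper leaves exactly the same gap (it proves only the nonnegative case and its local-unitary orbit, citing \cite{di,hu} for the rest), so your write-up is on par with the paper's, not behind it. If you want to close it along your own lines, the standard device is to select, among the compact set of maximizing product states, one that maximizes an auxiliary functional such as $\sum_{i<j}|\langle a_i|a_j\rangle|^{2}$, and show that the pairwise Takagi step strictly increases this functional unless all factors already coincide; alternatively, follow the elementary matrix argument of H\"ubener \emph{et al.}~\cite{hu}, which shows directly that for $n\ge 3$ any closest product state to a symmetric state must itself be symmetric.
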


As we have pointed out in the historical note in the Introduction, this answer
is yes. Here we give simple proof for certain classes of permutation invariant
states, beyond the basis states discussed above. For the general proof, we
refer to Refs.~\cite{di,hu}.

The first class we consider is the case where the coefficients $c_{\vec{k}}$%
's are nonnegative. To obtain the maximal overlap for the corresponding
state $|\psi \rangle $, we can as well set the coefficients in the
unentangled state $|\phi \rangle $ to be nonnegative, as the goal is to
maximize the overlap between $|\psi \rangle $ and $|\phi \rangle $. Thus we
have
\begin{equation*}
\langle \phi |\psi \rangle =\sum_{\vec{k}}c_{\vec{k}}\phi _{\vec{k}}=\sum_{%
\vec{k}}c_{\vec{k}}\frac{\sqrt{C_{\vec{k}}^{n}}}{n!}\text{per}(A_{\vec{k}%
})\leq \sum_{\vec{k}}c_{\vec{k}}\frac{\sqrt{C_{\vec{k}}^{n}}}{n!}\text{per}(%
\overline{A}_{\vec{k}})=\sum_{\vec{k}}c_{\vec{k}}\sqrt{C_{\vec{k}}^{n}}\bar{%
\alpha}_{1}^{k_{1}}\dots \bar{\alpha}_{d}^{k_{d}}=\langle \phi _{S}|\psi
\rangle ,
\end{equation*}%
where we have used again the inequality proved in \cite{clb06}. This means
that for nonnegative $c_{\vec{k}}$'s, in order to maximize the overlap, we
can use the product state $|\phi _{S}\rangle $ in Eq. (\ref{pro}),
consisting of a direct product of identical single-party states. An example
of this is given by the states%
\begin{equation*}
\begin{tabular}{lll}
$|\mathrm{W\bar{W}}\left( s\right) \rangle \equiv \sqrt{s}|\mathrm{W}\rangle
+\sqrt{1-s}|\mathrm{\bar{W}}\rangle ,$ & where & $s\in \lbrack 0,1].$%
\end{tabular}%
\end{equation*}%
The way to use the product state in the form of Eq. (\ref{form}) is
justified. This was the \emph{ansatz} used to calculate the entanglement for
this family of states in \cite{wg03}. In particular, the product state can
be written as
\begin{equation*}
|\phi _{S}\left( \theta \right) \rangle \equiv \left( \cos \theta |0\rangle
+\sin \theta |1\rangle \right) ^{\otimes 3}
\end{equation*}%
and then, by maximizing the inner product $\langle \phi _{S}\left( \theta
\right) |\mathrm{W\bar{W}}\left( s\right) \rangle $ over $\theta $, it is
straighforward to obtain the expression%
\begin{equation*}
\Lambda _{\max }\left( \mathrm{W\bar{W}}\left( s\right) \right) =\frac{1}{2}%
\left( \sqrt{s}\cos \theta (s)+\sqrt{1-s}\sin \theta \left( s\right) \right)
\sin 2\theta (s),
\end{equation*}%
where $\theta \left( s\right) $ is the solution of the equation%
\begin{equation*}
\begin{tabular}{lll}
$\sqrt{1-s}\tan ^{3}\theta +2\sqrt{s}\tan ^{2}\theta -2\sqrt{1-s}\tan \theta
-\sqrt{s}=0,$ & where & $\tan \theta \in \left[ 1/\sqrt{2},\sqrt{2}\right] $%
\end{tabular}%
.
\end{equation*}

We can also approach a more general class of states. When the coefficient $%
c_{\vec{k}}$' s are arbitrary but nonnegative, the above consideration of
using states as in Eq. (\ref{form}) holds; this gives us the corresponding
state
\begin{equation*}
|\psi \rangle =\sum_{\vec{k}}c_{\vec{k}}|S(n,\vec{k})\rangle .
\end{equation*}%
Now, we perform a basis change on $|\psi \rangle $, \emph{i.e.},
\begin{equation*}
|\psi ^{\prime }\rangle \equiv (U\otimes U\otimes ...\otimes U)|\psi \rangle
=\sum_{\vec{k}}b_{\vec{k}}|S(n,\vec{k})\rangle ,
\end{equation*}%
where $U$ is any unitary transformation in U$(d)$. Specifically, the
transformation $U$ acts on a single party, and $b_{\vec{k}}$'s are the
resulting coefficients for $|\psi ^{\prime }\rangle $ expanded in the basis
of $|S(n,\vec{k})\rangle $. The resulting coefficients $b_{\vec{k}}$'s are
in general complex. Since we have shown that to calculate the entanglement
for $|\psi \rangle $ we can assume the product state to be as in Eq. (\ref%
{form}) and $|\psi ^{\prime }\rangle $ is simply given by a local change of
basis, to calculate the entanglement for $|\psi ^{\prime }\rangle $, we can
take a fiducial state of the same form. To illustrate this fact, we consider
a generic element of SU$\left( 2\right) $:%
\begin{equation*}
\begin{tabular}{lll}
$U=\left(
\begin{array}{cc}
u & v \\
-v^{\ast } & u^{\ast }%
\end{array}%
\right) ,$ & where & $|u|^{2}+|v|^{2}=1.$%
\end{tabular}%
\end{equation*}%
The effect of $U$ on the parties of $|\mathrm{W}\rangle $ is given by
\begin{equation*}
|\mathrm{W}\rangle \longmapsto \frac{1}{\sqrt{3}}\left( -3u^{2}v^{\ast
}|000\rangle +\sqrt{3}u\left( |u|^{2}-2|v|^{2}\right) |\mathrm{W}\rangle +%
\sqrt{3}v\left( 2|u|^{2}-|v|^{2}\right) |\mathrm{\bar{W}}\rangle +3u^{\ast
}v^{2}|111\rangle \right)
\end{equation*}%
and similarly for $|\mathrm{\bar{W}}\rangle $, $|000\rangle $ and $%
|111\rangle $, our basis states in the symmetric subspace. The corresponding
coefficients are in general complex. When $U$ is diagonal, the coeffients $%
c_{\vec{k}}$ are transformed as $d_{\vec{k}}=c_{\vec{k}}e^{i\vec{k}\cdot
\vec{\theta}}$, where $\vec{\theta}$ is an arbitrary real $d$-component
vector characterizing the matrix $U$.

It takes $8$ real parameters to describe the generic permutation invariant
states for three qubits. If we start with $4$ arbitrary nonnegative
coefficients and supplement with arbitrary U$(2)$ transformations (which have
$4$ real parameters), we have then $8$ real parameters in total. The number
boils down to $6$ in both cases, if we take into account normalization and
global phase. This counting suggests that the statement in
Conjecture~\ref{problem} may be true, which we now know it is correct. We note
that there have been applications using this to search for maximal entangled
states; see, e.g., Ref.~\cite{ta}, in which the above GHZ and W states belong
to such a family.

\begin{problem}
\label{problem2} Given that the answer to Conjecture 1 is correct, can we then
derive a more general permanent inequality than that of Carlen, Lieb and Loss?
\end{problem}

We give our reasoning of why such a general inequality might exist. The
inequality of Carlen, Lieb and Loss has enabled us to prove a partial answer
to Conjecture 1 for nonnegative symmetric states, but not the most generality.
The more general permanent inequality if it existed would enable us to
establish the proof of Conjecture 1 in the most generality and would be such
an inequality that we seek. Since Conjecture 1 is correct, we expect a more
general permanent inequality might exist.

Given the basis $|{S(n,\vec{k})}\rangle$, the most general permutation
invariant state $|{\psi}\rangle$ can be expanded as
\begin{equation}
|{\psi}\rangle=\sum_{\vec{k}} q_{\vec{k}}|{S(n,\vec{k})}\rangle.
\end{equation}
For the general product state $|{\phi}\rangle$ in Theorem 1, we have seen that
their overlap can be expressed in terms of linear combination of permanents,
\begin{equation}
\langle{\psi}|{\phi}\rangle=\sum_{\vec{k}} q_{\vec{k}}^*
\frac{\sqrt{C^n_{\vec{k}}}}{n\!} {\rm per}(A_{\vec{k}}).
\end{equation}
Since the answer to Conjecture 1 is affirmative, this implies that we can
always find a product state $|{\bar{\phi}}\rangle=\otimes_{j=1}^n
(\bar{\alpha}_l|{l}\rangle_j)$ such that
\begin{equation}
|\langle{\psi}|{\phi}\rangle|\le |\langle{\psi}|{\bar{\phi}}\rangle|,
\end{equation}
which is equivalent to
\begin{equation}
\Big|\sum_{\vec{k}} q_{\vec{k}}^* \frac{\sqrt{C^n_{\vec{k}}}}{n\!} {\rm
per}(A_{\vec{k}})\Big|\le \Big|\sum_{\vec{k}} q_{\vec{k}}^*
\frac{\sqrt{C^n_{\vec{k}}}}{n\!} {\rm per}(\bar{A}_{\vec{k}})\Big|,
\end{equation}
where $q_{\vec{k}}$'s are the coefficients and are arbitrary (up to
normalization constraint). If we could express $\bar{\alpha}_l$'s in terms of
$\alpha_{j,l}$'s which characterize $|{\phi}\rangle$, we would then achieve
our goal for a more general inequality of permanents. We leave this as an open
question.


\begin{thebibliography}{99}
\bibitem{bl01} H. Barnum and N. Linden, Monotones and invariants for
multi-particle quantum states, \emph{J. Phys. A: Math. Gen.} \textbf{34},
6787 (2001).

\bibitem{b99} A. Barvinok, Polynomial time algorithms to approximate
permanents and mixed discriminants within a simply exponential factor, \emph{%
Random Structures Algorithms} \textbf{14} 29--61.

\bibitem{bk95} A. Bia\l as and A. Krzywicki, Quantum Interference and
Monte-Carlo Simulations of Multiparticle Production, \emph{Phys. Lett. }%
\textbf{B354} (1995) 134.

\bibitem{b86} A. Z. Broder, How hard is it to marry at random? (On the
approximation of the permanent), in: Proceedings of the 18th Annual ACM
Symposium on Theory of Computing, ACM Press, New York, 1986, pp. 50--58.
Erratum in Proceedings of the 20th Annual ACM Symposium on Theory of
Computing, 1988, pp. 551.

\bibitem{c53} E. R. Caianiello, Explicit solution of Dyson's equation in
electrodynamics without use of Feynman graphs, \emph{Nuovo Cimento} \textbf{%
10} (1953), 1634--1652.

\bibitem{clb06} E. Carlen, M. Loss and E. Lieb, A inequality of Hadamard
type for permanents, \emph{Meth. and Appl. of Analysis}, \textbf{13} (2006),
no. 1, 1--17.

\bibitem{c59} A. Cayley, Note sur les normales d'une conique, \emph{Crelle's
J.}, \textbf{54} (1857) 182-185.

\bibitem{de04}
De Lathauwer L., ``First-Order Perturbation Analysis of the Best Rank-($R_1$,
$R_2$, $R_3$) Approximation in Multilinear Algebra'', Journal of Chemometrics,
TRICAP Special Issue (B. Rayens, Ed.), vol. {\bf 18}, no. 1, Jan. 2004, pp.
2-11., Lirias number: 137673.

\bibitem{di}
S. Dineen, Complex Analysis on Infinite Dimensional Spaces (Springer, London,
Berlin, Heidelberg, 1999).

\bibitem{e81} G. P. Egorychev, The solution of van der Waerden's problem for
permanents, \emph{Adv. in Math.} \textbf{42} (1981), no. 3, 299--305.

\bibitem{f81} D. I. Falikman, Proof of the van der Waerden conjecture on the
permanent of a doubly stochastic matrix. (Russian) \emph{Mat. Zametki}
\textbf{29} (1981), no. 6, 931--938, 957.

\bibitem{f04} S. Friedland, B. Rider, O. Zeitouni, Concentration of
permanent estimators for certain large matrices, \emph{Ann. Appl. Probab.}
\textbf{14} (2004), no. 3, 1559--1576.

\bibitem{fj95} A. Frieze, M. Jerrum, An analysis of a Monte Carlo algorithm
for approximating the permanent, Combinatorica 15 (1995) 67--83.

\bibitem{gg81} C. Godsil, I. Gutman, On the matching polynomial of a graph,
Algebraic Methods in Graph Theory, Vol. I, II (Szeged, 1978), \emph{Colloq.
Math. Soc. Janos Bolyai}, 25, North-Holland, Amsterdam-New York, 1981, pp.
241--249.

\bibitem{gu} L. Gurvits, Classical deterministic complexity of Edmond's
problem and quantum entanglement, in: Proceedings of the Thirty-Fifth Annual
ACM Symposium on Theory of Computing, 10--19 (electronic), ACM, New York,
2003.

\bibitem{h05} A. Hanany and K. D. Kennaway, Dimer models and toric diagrams,
MIT-CTP-3613. arXiv:hep-th/0503149v1.

\bibitem{hm06} M. Hayashi, D. Markham, M. Murao, M. Owari, and S. Virmani,
Bounds on Multipartite Entangled Orthogonal State Discrimination Using Local
Operations and Classical Communication, \emph{Phys. Rev. Lett.} \textbf{96},
040501 (2006).

\bibitem{hm08} M. Hayashi, D. Markham, M. Murao, M. Owari, and S. Virmani,
Entanglement of multiparty stabilizer, symmetric, and antisymmetric states,
\emph{Phys. Rev. A} \textbf{77}, 012104 (2008); also in arXiv:0905.0010.

\bibitem{hai} M. Hayashi, D. Markham, M. Murao, M. Owari, and S. Virmani,
The geometric measure of entanglement for a symmetric pure state having
positive elements,  \emph{J. Math. Phys.} {\bf 50}, 122104 (2009); also in
arXiv:0905.0010v1.

\bibitem{hu}
 R. H\"ubener, M.
Kleinmann, T.-C. Wei, C. Gonz\'alez-Guill\'en, and O. G\"uhne, The geometric
measure of entanglement for symmetric states, \emph{Phys. Rev. A} {\bf 80},
032324 (2009); also in arXiv:0905.4822.
\bibitem{hv07} J. J. Heckman and C. Vafa, Crystal melting and black holes,
\emph{JHEP} 0709 (2007) 011.

\bibitem{hl72} O. J. Heilmann, E. H. Lieb, Theory of Monomer-Dimer Systems,
\emph{Comm. Math. Phys.} \textbf{25} (1972), 190-232.

\bibitem{horo} R. Horodecki, P. Horodecki, M. Horodecki, K. Horodecki,
Quantum entanglement, arXiv:quant-ph/0702225v2.

\bibitem{kklll93} N. Karmarkar, R. Karp, R. Lipton, L. Lovasz, M. Luby, A
Monte-Carlo algorithm for estimating the permanent, \emph{SIAM J. Comput.}
\textbf{22} (1993), 284--293.

\bibitem{k61} P. W. Kasteleyn, The statistics of dimers on a lattice, I.,
The number of dimer arrangements on a quadratic lattice, \emph{Physica}
\textbf{27} (1961), 1664--1672.

\bibitem{k07} P. Kok, W. Munro, K. Nemoto, T. Ralph, J. P. Dowling, and G.
Milburn, Linear optical quantum computing, \emph{Rev. Mod. Phys.} \textbf{79}%
, 135 (2007).

\bibitem{js89} M. Jerrum and A. Sinclair, Approximating the permanent, \emph{%
SIAM Journal on Computing} \textbf{18} (1989), 1149--1178.

\bibitem{jsv01} M. Jerrum, A. Sinclair, E. Vigoda, A polynomial-time
approximation algorithm for the permanent of a matrix with nonnegative
entries, \emph{J. ACM} \textbf{51} (2004), no. 4, 671--697.

\bibitem{l00} N. Linial, A. Samorodnitsky, A. Wigderson, A deterministic
strongly polynomial algorithm for matrix scaling and approximate permanents,
\emph{Combinatorica} \textbf{20} (2000) 545--568.

\bibitem{ni}
G. Nia and Y. Wang, On the best rank-1 approximation to higher-order symmetric
tensors, Mathematical and Computer Modelling {\bf 46} (2007) 1345-�1352.

\bibitem{o06} A. Okounkov, N. Reshetikhin, and C. Vafa, Quantum Calabi-Yau
and classical crystals, "The unity of mathematics", \emph{Progr. Math.}
\textbf{244}, Birkh\"{a}user (2006) 597--618.

\bibitem{m82} H. Minc, \emph{Permanents}, Encyclopedia of Mathematics and
its Applications Vol. 6, Addison-Wesley, 1982.

\bibitem{m82b} H. Minc, Theory of permanents, 1978--1981, \emph{Linear and
Multilinear Algebra} \textbf{12} (1982/83), no. 4, 227--263.

\bibitem{m82a} H. Minc, Theory of permanents, 1982--1985, \emph{Linear and
Multilinear Algebra} \textbf{21} (1987), no. 2, 109--148.

\bibitem{nc} M. Nielsen and I. Chuang, \emph{Quantum Computation and Quantum
Information} (Cambridge University Press, 2000).

\bibitem{r94} L. E. Rasmussen, Approximating the permanent: a simple
approach, \emph{Random Structures Algorithms} \textbf{5} (1994), 349--361.

\bibitem{r63} H. J. Ryser, \emph{Combinatorial Mathematics}. Carus
Mathematical Monograph No. 14. Wiley, 1963.

\bibitem{sa08}
A. Samorodnitsky, An upper bound for permanents of nonnegative matrices,
Combin. \emph{Theory Ser. A} {\bf 115}, 279 (2008).
\bibitem{s03} S. Scheel, K. Nemoto, W.J. Munro, and P.L. Knight,
Measurement-induced Nonlinearity in Linear Optics, \emph{Phys. Rev. A}
\textbf{68}, 032310 (2003).

\bibitem{s04} S. Scheel, Permanents in linear optics networks,
quant-ph/0406127.

\bibitem{s06} S. Scheel, J. K. Pachos, E. A. Hinds, and P. L. Knight,
Quantum Gates and Decoherence, \emph{Lect. Notes Phys.} \textbf{689}, 47--81
(2006).

\bibitem{s95} A. Shimony, Degree of entanglement, \emph{Ann. NY. Acad. Sci.}
\textbf{755}, 675 (1995).

\bibitem{ta}
S. Tamaryan, T.-C. Wei, and D. Park, Maximally entangled three-qubit states
via geometric measure of entanglement, \emph{Phys. Rev. A} {\bf 80}, 052315
(2009).
\bibitem{val79} L. G. Valiant, The Complexity of Computing the Permanent,
\emph{Theoret. Comp. Sci.} \textbf{8}, 189--201 (1979).

\bibitem{we09} T.-C. Wei and S. Severini, Matrix permanent and quantum entanglement of permutation invariant states,
 arXiv:0905.0012; the arXiv version of present work.
\bibitem{we08} T.-C. Wei, Relative entropy of entanglement for multipartite
mixed states: Permutation-invariant states and D\"{u}r states, \emph{Phys.
Rev. A} \textbf{78}, 012327 (2008).

\bibitem{we04} T.-C. Wei, M. Ericsson, P. M. Goldbart, W. J. Munro,
Connections between relative entropy of entanglement and geometric measure
of entanglement, \emph{Quantum Info. Comput.} \textbf{v4}, p.252-272 (2004).

\bibitem{wg03} T.-C. Wei, P. M. Goldbart, Geometric measure of entanglement
for multipartite quantum states, \emph{Phys. Rev. A} \textbf{68}, 042307
(2003).

\bibitem{we10}
T.-C. Wei, Exchange symmetry and global entanglement and full separability,
\emph{Phys. Rev. A} \textbf{ 81}, 054102 (2010).
\bibitem{w90} D. J. A. Welsh, The Computational Complexity of Some Classical
Problems from Statistical Physics, Disorder in Physical Systems, G.R.
Grimmett and D.J.A. Welsh, eds., Clarendon Press, Oxford, 1990, pp. 307-321.

\bibitem{w97} J. Wosiek, A simple formula for Bose-Einstein corrections,
\emph{Phys. Lett.} \textbf{B399} (1997) 130.
\end{thebibliography}
\end{document}